\documentclass[conference]{IEEEtran}
\IEEEoverridecommandlockouts
\usepackage{cite}
\usepackage{amsmath,amssymb,amsfonts}
\usepackage{graphicx}
\usepackage{comment}
\usepackage{textcomp}
\usepackage{booktabs} 
\usepackage{multirow} 
\usepackage[caption=false,font=footnotesize]{subfig}
\usepackage{url}
\usepackage{mathtools}
\usepackage{algorithm} 
\usepackage{algpseudocode}
\usepackage{float}
\usepackage{amsthm}
\newtheorem{theorem}{Theorem}
\usepackage{xcolor}
\def\BibTeX{{\rm B\kern-.05em{\sc i\kern-.025em b}\kern-.08em
    T\kern-.1667em\lower.7ex\hbox{E}\kern-.125emX}}
\begin{document}
\raggedbottom

\title{VAMP-Diff: VampPrior Latent Diffusion 
for Photoplethysmography Modeling%
\thanks{\footnotesize Code at: 
\texttt{https://github.com/fatighasemi74/VAMP-Diff}}}

\author{
\IEEEauthorblockN{
Fatemeh Ghasemi Balouei$^{1}$ \quad
Nathan Willemsen$^{2}$ \quad
Mahesh Banavar$^{1}$ \quad
Bahman Moraffah$^{3}$
}
\IEEEauthorblockA{
$^{1}$Department of Electrical and Computer Engineering, Clarkson University, Potsdam, USA\\
$^{2}$Data Science, Worcester Polytechnic Institute, Worcester, USA\\
$^{3}$Department of Computer Science, Data Science, and Artificial Intelligence, Worcester Polytechnic Institute, Worcester, USA\\
\{ghasemf, mbanavar\}@clarkson.edu \quad
\{nawillemsen, bmoraffah\}@wpi.edu
}
}
\maketitle

\begin{abstract}
Photoplethysmography (PPG) has become a ubiquitous physiological signal; however, current generative models still struggle to preserve realistic waveform morphology and learn a latent structure that captures cardiac and respiratory physiology. PPG generators trained with adversarial losses can produce plausible waveforms, but provide no inference path from a real signal to a latent representation. Variational autoencoders, on the other hand, map the PPG data to latent codes, although their decoders often blur systolic upstrokes and dampen amplitude and spectral details. Diffusion models improve waveform fidelity, but typically lack an inference path for reconstruction and physiological analysis. We propose VampPrior Latent Diffusion (VAMP-Diff), a jointly trained variational diffusion model that combines a temporal PPG encoder, a conditional one-dimensional diffusion decoder, and VampPrior regularization on a compact pooled latent. The model uses full temporal latent during diffusion reconstruction, giving the decoder access to beat timing and morphology, while generating samples from learned VampPrior components instead of a fixed Gaussian prior. We demonstrate on the CapnoBase dataset that VAMP-Diff produces realistic PPG signals, reconstructs sharper physiological waveforms than Gaussian-prior baselines, preserves heart-rate information, maintains respiratory-rate consistency, and is sensitive to waveform corruptions through reconstruction error.
\end{abstract}


\section{Introduction}
\label{sec:intro}

Photoplethysmography (PPG)~\cite{allen2007photoplethysmography, elgendi2012ppg, charlton2022wearable} is a non-invasive optical technique that encodes cardiac and respiratory physiology in a periodic waveform. Generative modeling of PPG requires preserving both global structure, such as dominant periodicity and amplitude, and local morphology, such as the systolic upstroke, dicrotic notch, and inter-beat variation. This is a challenging task since clinically relevant information in PPG signals is distributed across multiple timescales: cardiac periodicity at $0.7$--$3.0$ Hz, respiratory modulation at $0.1$--$0.5$ Hz, and beat-level morphological features that occur within individual cycles. A generative model that collapses this temporal structure into a single latent vector loses the positional information needed to reconstruct these features faithfully. The scarcity of large labeled PPG datasets motivates the need for generative models that produce physiologically realistic synthetic signals for downstream tasks such as respiratory rate estimation~\cite{bian2020respiratory, aqajari2021end}. Generative approaches for PPG and related physiological signals include generative adversarial networks (GANs)~\cite{goodfellow2014generativeadversarialnetworks, article, vo2021p2e, NEURIPS2019_c9efe5f2} that optimize adversarial objectives to match signal statistics and diffusion models~\cite{ho2020denoising, sohl2015deep, alcaraz2023diffusion, tashiro2021csdiconditionalscorebaseddiffusion} that model the full conditional distribution through iterative denoising. Latent diffusion models~\cite{Rombach_2022_CVPR,preechakul2022diffusion, vahdat2021scorebasedgenerativemodelinglatent} combine a learned latent space with a diffusion decoder but train the two components separately. Variational 
autoencoders (VAEs)~\cite{kingma2013auto} learn structured latent representations, and hierarchical VAEs~\cite{NEURIPS2020_e3b21256, child2020very} have demonstrated strong performance on image generation. Diffusion priors have also been introduced into VAEs to improve generation~\cite{wehenkel2021diffusion, kingma2021variational}, Recent work has also combined the VampPrior~\cite{tomczak2018vampprior} with hierarchical VAEs to improve prior expressiveness~\cite{kuzina2024hierarchical}; however, this occurs in a setting different from one-dimensional physiological signal generation. None of these approaches jointly trains a temporal encoder and a diffusion decoder under a data-adaptive prior. Without joint training, the latent space cannot adapt to both the KL regularization and the decoder's reconstruction objective simultaneously, which limits the fidelity of generated waveforms and their utility for physiological inference.

We propose VAMP-Diff, a jointly trained variational diffusion model that combines a temporal PPG encoder, a conditional one-dimensional diffusion decoder, and a VampPrior defined on a compact pooled latent, evaluated on CapnoBase~\cite{karlen2021capnobase}. We also introduce a morphology-preserving loss that supplements the variational diffusion objective with spectral, derivative, amplitude, and peak-to-peak terms to enforce waveform fidelity beyond sample-level error. VAMP-Diff reconstructs sharper physiological waveforms than Gaussian-prior baselines, preserves heart-rate information, and maintains respiratory-rate estimator consistency on held-out test signals. 

The remainder of the paper is organized as follows. Section~\ref{sec:background} provides background on PPG, VAEs, and diffusion models. Section~\ref{sec:hybridmodel} describes the proposed model and training objective. Section~\ref{sec:downstream} defines downstream evaluation tasks, and Section~\ref{sec:empirical_results} reports empirical results. Finally, concluding remarks are provided in Section~\ref{sec:conclusions}. 

\section{Background}
\label{sec:background}

\subsection{Photoplethysmography}
\label{subsec:ppg_background}

Photoplethysmography (PPG) is a non-invasive optical technique that measures pulsatile changes in blood volume in peripheral tissue, producing a periodic waveform that encodes heart rate, respiratory rate, and autonomic activity~\cite{allen2007photoplethysmography, elgendi2012ppg}. Since PPG sensors are inexpensive and easy to deploy, they are widely used in clinical monitors and wearable devices, with applications that range from cardiovascular monitoring to biometric analysis~\cite{Lovisotto_2020, venkataswamy2025descriptor}. However, PPG waveforms are difficult to model since clinically relevant information is distributed across peak timing, systolic upstroke, dicrotic morphology, amplitude changes, and slower respiratory variation across the window. Large labeled PPG datasets are also hard to obtain because data collection often requires controlled acquisition, expert annotation, and privacy protection under regulations such as HIPAA, which leaves only a small number of public datasets such as CapnoBase and related physiological signal resources~\cite{karlen2021capnobase, 7748483, Lovisotto_2020, venkataswamy2025descriptor}.

\subsection{Variational Autoencoders and VampPrior}
\label{subsec:vae_vampprior_background}

Variational Autoencoders (VAEs)~\cite{kingma2013auto} learn an encoder \(q_\phi(z\mid x_0)\), a decoder \(p_\theta(x_0\mid z)\), and a prior \(p(z)\) through the evidence lower bound~\cite{odaibo2019tutorial}
\begin{equation}
    \mathcal{L}_{\mathrm{VAE}}
    =
    \mathbb{E}_{q_\phi(z\mid x_0)}
    \!\left[
    \log p_\theta(x_0\mid z)
    \right]
    -
    D_{\mathrm{KL}}
    \!\left(
    q_\phi(z\mid x_0)
    \,\|\,p(z)
    \right).
    \label{eq:vae_elbo}
\end{equation}
The reconstruction loss trains the decoder to recover \(x_0\) from the latent variable, and the KL penalty regularizes the encoder posterior against the generation prior. A vanilla VAE usually assumes a fixed prior \(p(z)=\mathcal{N}(0, I)\), which can poorly match the aggregate posterior when the data lies on a structured or curved latent manifold. This mismatch can degrade generation, since samples from \(\mathcal{N}(0, I)\) may fall outside the latent regions that guide decoder training.

The VampPrior~\cite{tomczak2018vampprior} addresses this mismatch by defining the prior as a mixture of encoder posteriors evaluated at \(K\) learnable pseudo-inputs \(\{u_k\}_{k=1}^{K}\),
\begin{equation}
    p_\psi(z)
    =
    \frac{1}{K}
    \sum_{k=1}^{K}
    q_\phi(z\mid u_k).
    \label{eq:vampprior}
\end{equation}
The encoder produces each mixture component, which allows the prior to adapt to the geometry of the learned posterior during training. The corresponding VAE objective then is 
\begin{equation}
    \mathcal{L}_{\mathrm{Vamp}}
    =
    \mathbb{E}_{q_\phi(z\mid x_0)}
    \!\left[
    \log p_\theta(x_0\mid z)
    \right]
    -
    D_{\mathrm{KL}}
    \!\left(
    q_\phi(z\mid x_0)
    \,\|\,p_\psi(z)
    \right).
    \label{eq:vamp_elbo}
\end{equation}
The compact VampPrior regularizer in Section~\ref{subsec:encoder} builds on this prior and applies the KL to a pooled temporal latent.

\subsection{Denoising Diffusion Models}
\label{subsec:diffusion}

Denoising diffusion probabilistic models (DDPMs)~\cite{ho2020denoising} generate data by learning to reverse a gradual Gaussian noising process. Starting from a clean signal \(x_0\), the forward process samples
\begin{equation}
    q(x_t\mid x_0)
    =
    \mathcal{N}
    \!\left(
    x_t;
    \sqrt{\bar{\alpha}_t}\,x_0,
    (1-\bar{\alpha}_t)I
    \right),
    \qquad
    t=1,\ldots,T,
    \label{eq:ddpm_forward_background}
\end{equation}
where \(\bar{\alpha}_t\) is the cumulative noise schedule. A neural network, often a U-Net architecture~\cite{DBLP:journals/corr/RonnebergerFB15}, is trained to recover the clean signal or the injected noise from \(x_t\). In this paper, our decoder uses the clean-signal prediction parameterization \(\widehat{x}_0\), since the losses in Section~\ref{subsec:loss} act directly on the reconstructed waveform. Sampling from a diffusion model can require many reverse steps, so DDIM~\cite{song2020denoising} is used at inference to obtain a deterministic and faster reverse trajectory.
\section{VAMP-Diff: VampPrior Latent Diffusion for PPG}
\label{sec:hybridmodel}

\subsection{Temporal Latent Encoder and Prior Regularization}
\label{subsec:encoder}

The encoder maps each PPG window \(x_0\in\mathbb{R}^{L}\) to a diagonal Gaussian posterior over a temporal latent sequence,
\begin{equation}
  q_\phi(z\mid x_0)
=
\mathcal N\!\left(
z;\mu_\phi(x_0),\operatorname{diag}(\sigma_\phi^2(x_0))
\right),
\qquad
z\in\mathbb{R}^{C_z\times T_z}.  
\end{equation}
The sequence structure of \(z\) is retained so that the decoder receives temporally ordered information across the PPG window. A single pooled code can remove the temporal ordering before the denoising decoder receives the latent representation. For unconditional generation, the standard Gaussian prior \(p(z)=\mathcal N(0,I)\) does not match the aggregate posterior learned by the encoder, thus the VAMP-Diff employs the VampPrior~\eqref{eq:vampprior}. VampPrior is built from \(K\) learnable pseudo-inputs \(\{u_k\}_{k=1}^{K}\), each passed through the same encoder \(q_\phi(z\mid x_0)\). The pseudo-input initialization samples real training windows under heart-rate and amplitude stratification, as random pseudo-input initialization may lead to an unstable KL behavior early in training. Since \(z\in\mathbb{R}^{C_z\times T_z}\) is high-dimensional, we evaluate the VampPrior KL in a pooled latent space. Let
\[
A:\mathbb{R}^{C_z\times T_z}\to\mathbb{R}^{C_z\times T_c},
\qquad T_c<T_z,
\]
denote temporal pooling, and define \(\widetilde z=Az\). During training, the compact posterior
\[
q_\phi^A(\widetilde z\mid x_0)=A_{\#}q_\phi(z\mid x_0)
\]
is matched to the compact VampPrior \(p_\psi^A(\widetilde z)\). For generation, the decoder does not sample from the pooled space; it receives a full temporal latent sampled from the pseudo-input posterior \(q_\phi(z\mid u_k)\).

\subsection{Conditional Diffusion Decoder}
\label{sec:decoder}

The decoder represents \(p_\theta(x_0\mid z)\) through a conditional one-dimensional diffusion U-Net that reconstructs a clean PPG waveform from a noisy diffusion state and the full temporal latent. Given the forward diffusion process
\begin{equation}
\label{eq:diff_forward}
\begin{aligned}
x_t
&=
\sqrt{\bar\alpha_t}\,x_0
+
\sqrt{1-\bar\alpha_t}\,\epsilon,
\\
\epsilon
&\sim \mathcal N(0,I),
\qquad
t\sim\mathrm{Unif}\{1,\ldots,T\}.
\end{aligned}
\end{equation}

the network predicts the clean signal directly as \(\widehat{x}_0
=
f_\theta(x_t,t,z).\) The decoder receives \(z\in\mathbb{R}^{C_z\times T_z}\) at full temporal resolution through multiscale FiLM conditioning~\cite{perez2018film}. At decoder scale \(\ell\), the latent is temporally matched to the feature resolution and projected to affine FiLM parameters,
\begin{equation}
   (\gamma^{(\ell)}(z),\beta^{(\ell)}(z))
=
g_\ell\!\left(A_\ell z\right), 
\end{equation}
where \(A_\ell\) is temporal resampling to the resolution of scale \(\ell\), and \(g_\ell\) is a learned projection network. For decoder features \(h^{(\ell)}\), the conditioned feature map is then 
\begin{equation}
    \mathrm{FiLM}_{\ell}(h^{(\ell)},z)
=
\gamma^{(\ell)}(z)\odot h^{(\ell)}
+
\beta^{(\ell)}(z).
\end{equation}

We jointly train the encoder, VampPrior, and diffusion decoder in a single objective. The latent representation therefore receives reconstruction gradients from the decoder and prior-regularization gradients from the VampPrior during the same optimization, whereas the latent diffusion models first train a compressor and then freeze it for diffusion training \cite{Rombach_2022_CVPR}.

\paragraph{Inference with DDIM}
At inference, we use DDIM sampling with \(\eta=0\), which defines a deterministic decoding trajectory for a fixed latent \(z\) and fixed initial noise \(x_T\) \cite{song2020denoising}.  Reconstruction conditions the decoder on the encoder-posterior latent, whereas unconditional generation conditions it on a full-resolution latent sampled from the VampPrior.

\subsection{Full-Resolution Latent Training Objective}
\label{subsec:objective}

The VAMP-Diff architecture imposes two coupled training requirements in which the encoder learns a temporal latent representation that reconstructs each observed PPG window through the conditional diffusion decoder. The same latent space is then regularized so that samples drawn from the learned VampPrior decode into physiological waveforms at generation time. For a window \(x_0\in\mathbb{R}^{L}\), the encoder defines \(q_\phi(z\mid x_0)\) over the sequence latent \(z\in\mathbb{R}^{C_z\times T_z}\), and the diffusion decoder predicts \(\widehat{x}_0=f_\theta(x_t,t,z)\) from the noisy signal \(x_t\), timestep \(t\) and full latent \(z\). The VampPrior KL is evaluated on the pooled latent \(\widetilde z\) and \(T_c<T_z\), since direct mixture-prior matching in the full sequence latent space is computationally expensive and unstable. Using \(z\) for decoding and \(\widetilde z\) for the VampPrior KL preserves temporal detail for waveform reconstruction and keeps prior matching in a compact latent space.

\begin{theorem}
\label{thm:vampdiff_objective}
Let \(x_0 \in \mathbb{R}^{L}\) be a PPG window drawn from \(p_{\mathrm{data}}\). 
The encoder defines a diagonal Gaussian posterior over a temporal latent sequence
\[
q_\phi(z\mid x_0)
=
\mathcal N\!\left(
z;\mu_\phi(x_0),\operatorname{diag}(\sigma_\phi^2(x_0))
\right),
\qquad
z\in \mathbb{R}^{C_z\times T_z}.
\]
Suppose \(A:\mathbb{R}^{C_z\times T_z}\to \mathbb{R}^{C_z\times T_c}\), with \(T_c<T_z\) is the temporal pooling operator used for KL evaluation, and define \(\widetilde z =: A z .\) The induced compact posterior is
\begin{equation}
  q_\phi^{A}(\widetilde z\mid x_0)
=
A_{\#}q_\phi(z\mid x_0),  
\end{equation}
where \(A_{\#}\) is the pushforward distribution under \(A\). 
Given learnable pseudo-inputs \(\{u_k\}_{k=1}^{K}\), the compact VampPrior is \(p_\psi^{A}(\widetilde z)
=
\frac{1}{K}\sum_{k=1}^{K}
q_\phi^{A}(\widetilde z\mid u_k)\).

For the diffusion forward process Equation~\eqref{eq:diff_forward}, let \(f_\theta(x_t,t,z)\) be a conditional diffusion decoder that predicts the clean signal \(x_0\). 
Then the VAMP-Diff training loss is
\begin{equation}
\label{eq:vamp_diff}
\begin{aligned}
\mathcal L_{\mathrm{VAMP\text{-}Diff}}(\theta,\phi,\psi)
&=
\mathbb E_{\substack{
x_0\sim p_{\mathrm{data}}\\
z\sim q_\phi(z\mid x_0)\\
t\sim \mathrm{Unif}\{1,\ldots,T\}\\
\epsilon\sim\mathcal N(0,I)
}}
\left[
\left\|
x_0
-
f_\theta(x_t,t,z)
\right\|_2^2
\right]
\\
&\hspace{-5.0em}
+
\beta\,
\mathbb E_{x_0\sim p_{\mathrm{data}}}
...
\left[
D_{\mathrm{KL}}\!\left(
q_\phi^{A}(\widetilde z\mid x_0)
\,\|\, 
p_\psi^{A}(\widetilde z)
\right)
\right].
\end{aligned}
\end{equation}

The first term in \(\mathcal L_{\mathrm{VAMP\text{-}Diff}}\) is the \(x_0\)-prediction diffusion reconstruction objective conditioned on the full temporal latent \(z\) and the second term regularizes the pooled latent distribution toward the compact VampPrior. 
Hence, the decoder receives the full temporal latent needed for beat timing and morphology, whereas the VampPrior regularization is computed in a lower-dimensional latent space for tractability.
\end{theorem}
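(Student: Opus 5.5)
The theorem is essentially a derivation of a training objective. I will show that \eqref{eq:vamp_diff} is, up to positive constants, a negative variational bound (a $\beta$-weighted ELBO) for the model $p_\psi^A(\widetilde z)\,p(z\mid\widetilde z)\,p_\theta(x_0\mid z)$, where the decoder likelihood is realized by the $x_0$-prediction diffusion decoder. The plan has three steps. First, rewrite the Vamp objective \eqref{eq:vamp_elbo} with the reconstruction term expressed through the diffusion decoder. Second, show that replacing the full-latent KL by the pooled KL is well defined and tractable. Third, assemble the two terms with the weight $\beta$.

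For the reconstruction term, I will use the standard DDPM variational bound with the forward process \eqref{eq:diff_forward} and condition every reverse transition on $z$. I parameterize the reverse mean through $\widehat x_0=f_\theta(x_t,t,z)$ via the Gaussian posterior $q(x_{t-1}\mid x_t,x_0)$. Each KL term in the bound is then a Gaussian KL with shared covariance, which equals $w_t\|x_0-f_\theta(x_t,t,z)\|_2^2$ plus a constant. Two simplifications turn this into the first term of \eqref{eq:vamp_diff}: dropping the weights $w_t$ (the usual simplified loss of \cite{ho2020denoising}), and replacing the sum over $t$ by uniform sampling of $t$. I will state explicitly that the result is a surrogate, equal to the reweighted bound up to constants, and not the exact negative log-likelihood.

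For the prior term, I will use linearity of $A$. The pushforward of a diagonal Gaussian is $A_{\#}q_\phi(z\mid x_0)=\mathcal N(A\mu_\phi(x_0),\,A\operatorname{diag}(\sigma_\phi^2(x_0))A^\top)$. This covariance is nondegenerate when $A$ has full row rank, which holds for averaging pools with $T_c<T_z$. The same formula applied to $u_k$ gives each component of $p_\psi^A$, so the KL in \eqref{eq:vamp_diff} is between a Gaussian and a $K$-component Gaussian mixture in dimension $C_zT_c$. It can be estimated unbiasedly by Monte Carlo with a log-sum-exp over the components.

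Linking this KL to the full model is the step I expect to be the main obstacle. The chain rule for KL gives
\[
D_{\mathrm{KL}}(q_\phi(z\mid x_0)\,\|\,p(z)) = D_{\mathrm{KL}}(q^A_\phi\,\|\,p^A) + \mathbb E\, D_{\mathrm{KL}}\!\left(q_\phi(z\mid\widetilde z,x_0)\,\|\,p(z\mid \widetilde z)\right),
\]
which is at least $D_{\mathrm{KL}}(q_\phi^A\,\|\,p^A)$ by the data-processing inequality. This shows the pooled term is a relaxation of the full VampPrior KL: the pooled term alone is not a valid ELBO term. It becomes one exactly when the prior is chosen so that the conditional term vanishes, i.e.\ $p(z\mid\widetilde z)=q_\phi(z\mid\widetilde z,x_0)$. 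I will present the pooled KL as a lower bound on the full KL, which makes the objective a regularizer rather than an exact bound, and I will state that caveat in the proof. Combining the two terms with the weight $\beta$ then yields \eqref{eq:vamp_diff}, and the closing interpretive sentences of the theorem follow directly from which latent ($z$ or $\widetilde z$) enters each term.
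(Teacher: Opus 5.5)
For the reconstruction term you follow the paper's proof. Both condition the DDPM bound on $z$ and write the reverse mean through $\widehat x_0=f_\theta(x_t,t,z)$, so each step KL becomes $w_t\|x_0-f_\theta(x_t,t,z)\|_2^2+C_t$. Both then drop $w_t$ and sample $t$ uniformly. You should also account for the parameter-free terminal KL and the $t=1$ term $\log p_\theta(x_0\mid x_1,z)$. Both appear in the paper's decomposition, and they are what let $t$ range over all of $\{1,\ldots,T\}$.

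The prior term is where your route genuinely differs. The paper invokes a ``variational identity on the compact latent space.'' It writes $\log p_{\theta,\psi}(x_0)=\log\int p_\theta(x_0\mid z)\,p_\psi^A(\widetilde z)\,d\widetilde z$, which integrates an integrand in $z$ over $\widetilde z$. From this it directly claims $\log p_{\theta,\psi}(x_0)\ge\mathbb E_{q_\phi}[\log p_\theta(x_0\mid z)]-D_{\mathrm{KL}}(q_\phi^A\,\|\,p_\psi^A)$.

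You instead compute the Gaussian pushforward and apply the KL chain rule, which shows the pooled KL is at most the full KL. The pushforward computation needs $A$ to be linear, e.g.\ average pooling, which the theorem leaves implicit. Note also that for average pooling $A\operatorname{diag}(\sigma_\phi^2)A^\top$ is not $\operatorname{diag}(A\sigma_\phi^2)$ as used in Algorithm~\ref{alg:training}.

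The chain rule is precisely why the paper's inequality does not follow. Passing to the pooled KL discards the nonpositive term $-\mathbb E_{q_\phi^A}D_{\mathrm{KL}}(q_\phi(z\mid\widetilde z,x_0)\,\|\,p(z\mid\widetilde z))$ from a valid ELBO, and this can lift the right-hand side above $\log p(x_0)$. So the paper's route gives a clean but unsupported ELBO reading of \eqref{eq:vamp_diff}. Yours gives a correct reading: a reweighted diffusion loss plus a relaxation of the VampPrior KL, with an explicit gap and an explicit well-definedness condition (full row rank of $A$).

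To make your write-up consistent, make three changes. First, drop the opening promise that \eqref{eq:vamp_diff} is a negative variational bound, since your own argument refutes it. Second, replace the exactness condition $p(z\mid\widetilde z)=q_\phi(z\mid\widetilde z,x_0)$ with the gap itself; that condition depends on $x_0$ and so is not a legitimate prior. Third, take $p$ to be the full-resolution VampPrior $p_\psi(z)=\frac1K\sum_k q_\phi(z\mid u_k)$, which Algorithm~\ref{alg:generation} samples from. Pushforward commutes with mixtures, so $A_{\#}p_\psi=p_\psi^A$, and your inequality becomes $D_{\mathrm{KL}}(q_\phi(z\mid x_0)\,\|\,p_\psi(z))\ge D_{\mathrm{KL}}(q_\phi^A\,\|\,p_\psi^A)$ for the prior actually used at generation.
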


Theorem~\ref{thm:vampdiff_objective} separates two roles of the latent variable, i.e., the full latent \(z\in\mathbb{R}^{C_z\times T_z}\) conditions the diffusion decoder and preserves temporal information, while the pooled latent \(\widetilde z=Az\in\mathbb{R}^{C_z\times T_c}\) is used only in the VampPrior KL term, which makes mixture-prior regularization tractable without removing temporal information from the decoder. The annealing coefficient $\beta \geq 0$ follows a floor-then-ramp schedule discussed in Section~\ref{subsec:implementation}. Proof of Theorem~\ref{thm:vampdiff_objective} can be found in Appendix~\ref{pr:proof_thm}.

\subsection{Morphology-Preserving Loss Function}
\label{subsec:loss}

Theorem~\ref{thm:vampdiff_objective} gives the variational diffusion objective that couples the encoder, the compact VampPrior regularizer, and the conditional diffusion decoder. The objective function~\eqref{eq:vamp_diff} is sufficient to train the probabilistic model; however, PPG reconstruction and generation require more than local sample matching. It is worth noting that a signal can have low sample-wise error and still lose the sharp systolic upstroke, attenuate the dicrotic notch, distort harmonic content, or shrink the pulse amplitude. We therefore define auxiliary losses that penalize the specific waveform errors to preserve the PPG morphology.  We define the morphology-preserving loss function as
\begin{equation}
\label{eq:total_loss}
\begin{aligned}
\mathcal{L}_{\mathrm{MP\text{-}PPG}}(\theta,\phi,\psi)
&=
\mathcal{L}_{\mathrm{VAMP\text{-}Diff}}(\theta,\phi,\psi)
\\
&\hspace{-3.5em}
+ \lambda_{\mathrm{recon}}\mathcal{L}_{\mathrm{recon}}(x_0,\widehat{x}_0)
+ \lambda_{\mathrm{spec}}\mathcal{L}_{\mathrm{spec}}(x_0,\widehat{x}_0)
\\
&\hspace{-3.5em}
+ \lambda_{\mathrm{deriv}}\mathcal{L}_{\mathrm{deriv}}(x_0,\widehat{x}_0)
+ \lambda_{\mathrm{amp}}\mathcal{L}_{\mathrm{amp}}(x_0,\widehat{x}_0)
\\
&\hspace{-3.5em}
+ \lambda_{\mathrm{ptp}}\mathcal{L}_{\mathrm{ptp}}(x_0,\widehat{x}_0).
\end{aligned}
\end{equation}

where the coefficients
\(\lambda_{\mathrm{recon}},\lambda_{\mathrm{spec}},\lambda_{\mathrm{deriv}},
\lambda_{\mathrm{amp}},\lambda_{\mathrm{ptp}}\)
are fixed scalar weights that control the strength of the waveform penalties and \(\widehat{x}_0=f_\theta(x_t,t,z)\) is the clean signal prediction of the decoder.

The reconstruction loss compares the predicted clean signal with the target window at the sample level,
\begin{equation}
    \mathcal{L}_{\mathrm{recon}}(x_0,\widehat{x}_0)
    =
    \frac{1}{L}\sum_{i=1}^{L}
    \mathrm{SmoothL1}\!\left(
    \widehat{x}_0^{(i)},x_0^{(i)}
    \right).
    \label{eq:loss_recon}
\end{equation}
SmoothL1 is less sensitive to localized sample errors than squared loss but still penalizes systematic reconstruction bias. The spectral loss, on the other hand,  compares log-magnitude real-valued fast Fourier transform (FFT) coefficients,
\begin{equation}
\label{eq:spec_loss}
\begin{aligned}
\mathcal{L}_{\mathrm{spec}}(x_0,\widehat{x}_0)
&=
\frac{1}{F}\sum_{f=1}^{F}
\mathrm{SmoothL1}
\Big(
\log(1+|\widehat X_f|),
\\
&\hspace{4.5em}
\log(1+|X_f|)
\Big).
\end{aligned}
\end{equation}

where \(\mathcal F(\cdot)\), \(X_f=\mathcal F(x_0)[f]\), \(\widehat X_f=\mathcal F(\widehat{x}_0)[f]\), and \(|\cdot|\) are the real-valued FFT, the \(f\)-th frequency coefficient of the target signal, the coefficient of the reconstructed signal, and the complex magnitude, respectively. To penalize slope mismatch and preserve systolic upstrokes, we add the derivative loss,
\begin{equation}
\label{eq:loss_deriv}
\begin{aligned}
\mathcal{L}_{\mathrm{deriv}}(x_0,\widehat{x}_0)
&=
\frac{1}{L-1}\sum_{i=1}^{L-1}
\mathrm{SmoothL1}
\Big(
\widehat{x}_0^{(i+1)}-\widehat{x}_0^{(i)},
\\
&\hspace{4.5em}
x_0^{(i+1)}-x_0^{(i)}
\Big).
\end{aligned}
\end{equation}

We define \(   \mathcal{L}_{\mathrm{amp}}(x_0,\widehat{x}_0)\) to penalizes energy mismatch across the full segment as follows
\begin{equation}
    \mathcal{L}_{\mathrm{amp}}(x_0,\widehat{x}_0)
    =
    \left(
    \mathrm{std}(\widehat{x}_0)-\mathrm{std}(x_0)
    \right)^2,
    \label{eq:loss_amp}
\end{equation}
where \(\mathrm{std}(\cdot)\) is the standard deviation computed over the window.  The peak-to-peak loss accounts for the attenuation of pulse range
\begin{equation}
\label{eq:loss_ptp}
\mathcal{L}_{\mathrm{ptp}}(x_0,\widehat{x}_0)
=
\big[
\operatorname{ptp}(\widehat{x}_0)
-
\operatorname{ptp}(x_0)
\big]^2 ,
\end{equation}
where $\operatorname{ptp}(x)=\max(x)-\min(x)$. The standard-deviation and peak-to-peak losses are complementary, since two windows can have similar variance but different pulse heights, or similar peak-to-peak range but different energy distribution across the window.
Algorithms~\ref{alg:training} and~\ref{alg:generation} describe the training update with \(\mathcal L_{\mathrm{MP-PPG}}\) and unconditional generation from full-resolution VampPrior samples.

\begin{figure}[t]
\centering
\begin{minipage}[t]{0.48\textwidth}
\begin{algorithm}[H]
\caption{Training VAMP-Diff}
\label{alg:training}
\begin{algorithmic}[1]
\Repeat
    \State Sample $x_0 \sim p_{\mathrm{data}}$
    \State $(\mu_\phi(x_0),\sigma_\phi^2(x_0))=\mathrm{Encoder}_\phi(x_0)$
    \State Sample $\epsilon_z\sim\mathcal N(0,I)$
    \State $z=\mu_\phi(x_0)+\sigma_\phi(x_0)\odot\epsilon_z$
    \State $\widetilde z=A z$ \Comment{compact latent used only for KL}

    \For{$k=1,\ldots,K$}
        \State $(\mu_\phi(u_k),\sigma_\phi^2(u_k))=\mathrm{Encoder}_\phi(u_k)$
        \State $\widetilde\mu_k=A\mu_\phi(u_k)$
        \State $\widetilde\sigma_k^2=A\sigma_\phi^2(u_k)$
    \EndFor

    \State Define $q_\phi^A(\widetilde z\mid x_0)=A_{\#}q_\phi(z\mid x_0)$
    \State Define ${p_\psi^A(\widetilde z)=\dfrac{1}{K}\sum_{k=1}^{K}
    \mathcal N\!\left(\widetilde z;\widetilde\mu_k,
    \operatorname{diag}(\widetilde\sigma_k^2)\right)}$

    \State $\mathcal L_{\mathrm{KL}}
    =
    D_{\mathrm{KL}}\!\left(
    q_\phi^A(\widetilde z\mid x_0)
    \,\|\, 
    p_\psi^A(\widetilde z)
    \right)$

    \State Sample $t\sim\mathrm{Unif}\{1,\ldots,T\}$ and $\epsilon\sim\mathcal N(0,I)$
    \State $x_t=\sqrt{\bar\alpha_t}\,x_0+\sqrt{1-\bar\alpha_t}\,\epsilon$
    \State $\widehat x_0=f_\theta(x_t,t,z)$ \Comment{full temporal latent conditions decoder}

    \State $\mathcal L_{\mathrm{VAMP\text{-}Diff}}
    =
    \left\|x_0-\widehat x_0\right\|_2^2
    +
    \beta\,\mathcal L_{\mathrm{KL}}$

    \State Compute auxiliary losses
    $\mathcal L_{\mathrm{recon}}$, $\mathcal L_{\mathrm{spec}}$,
    $\mathcal L_{\mathrm{deriv}}$, $\mathcal L_{\mathrm{amp}}$,
    and $\mathcal L_{\mathrm{ptp}}$

    \State Compute $\mathcal L_{\mathrm{MP-PPG}}$ according to Eq.~\eqref{eq:total_loss}

    \State Update $(\theta,\phi,\psi)$ using $\nabla\mathcal L_{\mathrm{MP-PPG}}$
\Until{converged}
\end{algorithmic}
\end{algorithm}
\end{minipage}
\hfill
\begin{minipage}[t]{0.48\textwidth}
\begin{algorithm}[H]
\caption{Unconditional Generation}
\label{alg:generation}
\begin{algorithmic}[1]
    \State Sample component index $k\sim\mathrm{Unif}\{1,\ldots,K\}$
    \State $(\mu_\phi(u_k),\sigma_\phi^2(u_k))=\mathrm{Encoder}_\phi(u_k)$
    \State Sample $\epsilon_z\sim\mathcal N(0,I)$
    \State $z=\mu_\phi(u_k)+\sigma_\phi(u_k)\odot\epsilon_z$
    \Comment{full-resolution VampPrior sample}
    \State Sample $x_T\sim\mathcal N(0,I)$
    \For{$t=T,T-1,\ldots,1$}
        \State $\widehat x_0=f_\theta(x_t,t,z)$
        \State $\widehat\epsilon=
        \dfrac{x_t-\sqrt{\bar\alpha_t}\widehat x_0}
        {\sqrt{1-\bar\alpha_t}}$
        \State $x_{t-1}=
        \sqrt{\bar\alpha_{t-1}}\widehat x_0+
        \sqrt{1-\bar\alpha_{t-1}}\widehat\epsilon$
    \EndFor
    \State \Return $\widehat x_0$
\end{algorithmic}
\end{algorithm}
\end{minipage}
\end{figure}

\subsection{Architecture of VAMP-Diff}
\label{subsec:architecture_model}

VAMP-Diff combines latent inference, prior learning, and waveform synthesis in a single jointly trained architecture. Given an observed PPG window \(x_0\), the encoder outputs posterior parameters \(\mu_\phi(x_0)\) and \(\sigma_\phi^2(x_0)\) for the full temporal latent \(z\in\mathbb{R}^{C_z\times T_z}\), which conditions the diffusion decoder. The learnable pseudo-inputs \(\{u_k\}_{k=1}^{K}\) pass through the same encoder to form the VampPrior components used during training and generation. Figure~\ref{fig:architecture} shows the shared encoder structure and the distinction between the full latent \(z\) and the pooled latent \(\widetilde z=Az\). The KL comparison only utilizes the pooled latent \(\widetilde z=Az \in \mathbb{R}^{C_z\times T_c}\); the decoder receives the full temporal latent, which preserves temporal order through the diffusion updates. The conditional one-dimensional U-Net predicts \(\widehat{x}_0=f_\theta(x_t,t,z)\), and the decoder blocks receive \(z\) through multiscale FiLM modulation. This architecture keeps the KL computation tractable by applying prior matching only to the pooled latent. The diffusion decoder still receives the full temporal latent for reconstruction and generation, and the encoder, pseudo-inputs, and diffusion decoder are optimized jointly using the loss function~\eqref {eq:total_loss}.

\begin{figure*}[t]
    \centering
    \includegraphics[width=0.9\textwidth]{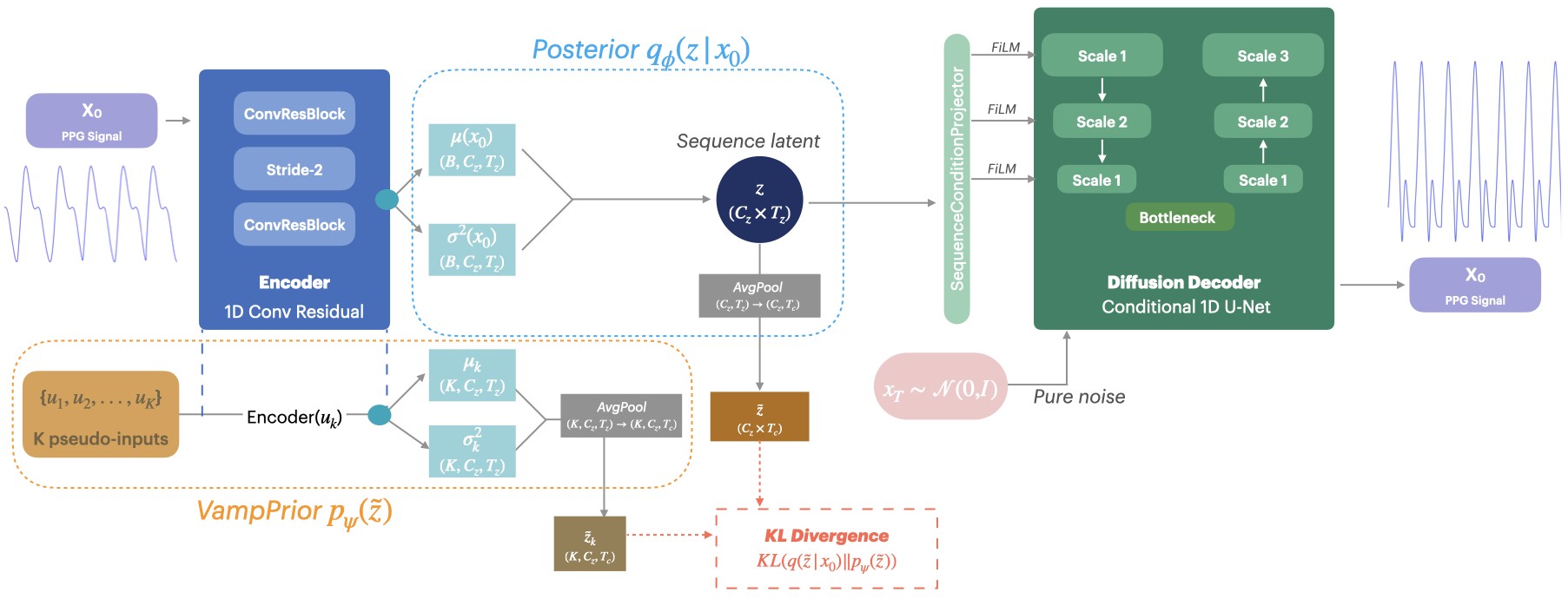}
    \caption{VAMP-Diff architecture showing the posterior path (top), VampPrior path (bottom), and diffusion decoder (right).}
    \label{fig:architecture}
\end{figure*}
\section{Downstream Tasks}
\label{sec:downstream}

We evaluate downstream behavior using reconstructed test signals to determine whether the learned latent representation preserves physiological content from the original PPG window. For each test signal \(x_0\), the encoder produces \(z=\mu_\phi(x_0)\), and the decoder reconstructs \(\widehat{x}_0\) through deterministic DDIM with \(\eta=0\), as described in Section~\ref{sec:decoder}. Both \(x_0\) and \(\widehat{x}_0\) are denormalized with the training-set statistics \(\mu_{\mathrm{train}}\) and \(\sigma_{\mathrm{train}}\). This process evaluates the encoder and decoder jointly, since the latent must keep a physiologically relevant structure and the diffusion decoder must recover that structure through the denoising process. We consider three test-split evaluations, heart-rate estimation, respiratory-rate estimator consistency, and reconstruction-based waveform corruption detection.

\subsection{Heart-Rate Estimation}
\label{subsec_hr_estimation}

Heart rate (HR) estimation is the first physiological test for PPG, as cardiac rhythm is the dominant periodic component of clean PPG, and accurate recovery depends primarily on inter-beat timing rather than pointwise waveform similarity~\cite{elgendi2012ppg, charlton2022wearable}. For each test window, both \(x_0\) and \(\widehat{x}_0\) are bandpass-filtered to \(0.7\)--\(3.0\) Hz before cardiac feature extraction. Systolic peaks are detected with minimum inter-peak distance \(0.35\) s, prominence \(0.1\sigma\) of the filtered signal, and height at the \(60\)th percentile. Per-window HR is computed as \(60/\bar{\mathrm{IBI}}\), where \(\bar{\mathrm{IBI}}\) is the mean inter-beat interval. Signal-level reconstruction is evaluated using MAE, RMSE, and Pearson correlation. Cardiac timing fidelity is assessed through HR absolute error and IBI mean absolute error on the VAMP-Diff model.

\subsection{Respiratory Rate Estimation}
\label{subsec_rr_estimation}

Respiratory rate (RR) is a more demanding physiological test, since respiration affects PPG through baseline variation, amplitude modulation, and frequency modulation in the \(0.1\)--\(0.5\) Hz band~\cite{charlton2022wearable}. We use a frozen-estimator procedure so that RR evaluation stays independent of VAMP-Diff training. A one-dimensional ResNet is trained once on real PPG to regress RR from CO\(_2\)-derived labels, and the generative model is not used during estimator training. The estimator uses a stride-2 stem with kernel size \(11\) and \(32\) channels, followed by four residual stages with channel widths \(\{32,64,128,128\}\), dilation rates \(\{1,2,4,8\}\), global average pooling, and a two-layer MLP regression head. Ground-truth RR is obtained from the synchronous EtCO\(_2\) channel in CapnoBase~\cite{karlen2021capnobase} by peak detection on a five-sample-smoothed signal, with minimum inter-peak distance \(1\) s and prominence \(0.05\). Recordings outside \([6,35]\) breaths per minute are excluded. We apply the frozen estimator to real and reconstructed PPG windows, and use \(\mathrm{MAE}_{\mathrm{recon}}-\mathrm{MAE}_{\mathrm{real}}\) to measure how much RR-relevant information changes after reconstruction.

\section{Empirical Results}
\label{sec:empirical_results}


The proposed VAMP-Diff model is evaluated along two axes: generative quality and reconstruction fidelity, validated through model comparison and downstream physiological inference tasks. The former tests whether the model produces PPG signals that are statistically and morphologically consistent with real data; the latter tests whether the encoded latent $z$ carries sufficient physiological information to recover clinically meaningful quantities through the decode pipeline, as measured by the downstream tasks of Section~\ref{sec:downstream}.

\subsection{Dataset and Preprocessing}
\label{subsec:dataset_preproc}
Experiments use the CapnoBase dataset~\cite{karlen2021capnobase}, a freely available ICU recording containing PPG and capnography signals from 42 patients, each recorded for approximately 8 minutes at 300 Hz. The PPG channel (\texttt{pleth\_y}) provides the input signal. Respiratory-rate ground truth is derived from the synchronous EtCO$_2$ channel (\texttt{CO$_2$\_y}) as described in Section~\ref{subsec_rr_estimation}.

Each recording is segmented into overlapping windows of $3{,}072$ samples ($10.24$ s). A bandpass filter ($0.7$--$3.0$ Hz) is applied transiently for peak-based quality checking only; windows with fewer than two detected systolic peaks are discarded. The stored windows retain the full signal spectrum, including the respiratory modulation band ($0.1$--$0.5$~Hz), which is necessary for downstream respiratory rate estimation. Windows are normalized using the training-set mean $\mu_{\text{train}}$ and the standard deviation $\sigma_\text{train}$, retained for denormalization at evaluation.
The dataset yields $6{,}815$ training, $1{,}410$ validation, and $1{,}645$ test windows. Recordings are split at the patient level, ensuring that no patient appears in more than one split and preventing data leakage between training and evaluation. 

\subsection{Implementation Details}
\label{subsec:implementation}

Architecture details are described in Section~\ref{sec:hybridmodel}. The encoder and decoder are trained jointly after a 20-epoch freeze phase that allows the decoder and pseudo-inputs to stabilize before the encoder is updated. Separate learning rates are used for each component because they play different roles: the pseudo-inputs require a higher rate to track the changing posterior, while the encoder uses a lower rate to avoid destabilizing the latent space. The KL weight $\beta$ is annealed rather than fixed to prevent posterior collapse early in training. Auxiliary loss weights $\lambda_{\mathrm{recon}},\lambda_{\mathrm{spec}}, \lambda_{\mathrm{deriv}},\lambda_{\mathrm{amp}},\lambda_{\mathrm{ptp}}$ were selected empirically on validation reconstructions to balance pointwise reconstruction quality, heart rate preservation, amplitude fidelity, and absence of high-frequency artifacts. All hyperparameters are summarized in Table~\ref{tab:hyperparams}.

\begin{table*}[t]
\footnotesize
\centering
\caption{Training hyperparameters.}
\begin{tabular}{lll}
\toprule
Component & Parameter & Value \\
\midrule
\multirow{4}{*}{Architecture} 
    & Encoder downsampling & $L=3072 \rightarrow T_z=768$ \\
    & Latent channels & $C_z=256$ \\
    & Diffusion timesteps & $T=100$ with linear noise schedule \\
    & DDIM steps / $\eta$ & $50$ / $0$ \\
\midrule
\multirow{2}{*}{VampPrior}
    & Pseudo-inputs $K$ & $100$ \\
    & Initialization & Stratified real windows \\
\midrule
\multirow{5}{*}{Optimizer}
    & Algorithm & AdamW \\
    & Decoder LR & $2\times10^{-5}$ \\
    & Encoder LR & $5\times10^{-6}$ \\
    & Pseudo-input LR & $2\times10^{-3}$ \\
    & Weight decay / grad clip & $10^{-4}$ / $1.0$ \\
\midrule
\multirow{2}{*}{Training}
    & Epochs / batch size & $200$ / $32$ \\
    & Encoder freeze & Epochs $1$--$20$ \\
\midrule
\multirow{3}{*}{KL annealing}
    & Frozen phase & $\beta = 0$ \\
    & Floor phase (ep.\ 21--50) & $\beta = 10^{-8}$ \\
    & Ramp phase (ep.\ 51--130) & $\beta \rightarrow 5\times10^{-8}$ \\
\midrule
\multirow{1}{*}{Loss weights}
    & $\lambda_{\mathrm{diff}}, \lambda_{\mathrm{recon}}, 
    \lambda_{\mathrm{spec}}, \lambda_{\mathrm{deriv}}, 
    \lambda_{\mathrm{amp}}, \lambda_{\mathrm{ptp}}$ & 
    $1.0,\ 5.0,\ 0.1,\ 0.1,\ 2.0,\ 1.0$ \\
\bottomrule
\end{tabular}
\label{tab:hyperparams}
\end{table*}

\subsection{ Empirical Result 1: Model Comparison}
\label{subsec:model_comparisons}
To highlight the benefits of the proposed design, we compare our model to a Vanilla VAE with a similar encoder architecture and a normal Gaussian, and the proposed VAE-Diff model with a normal Gaussian prior.

\subsubsection{Reconstruction Quality}
\label{subsubsec:reconstruction_quality}
In Table~\ref{tab:recon_metrics}, compares reconstruction quality across three models. VAMP-Diff achieves comparable pointwise reconstruction to VAE-Diff while reducing HR absolute error. The Vanilla VAE produces the lowest pointwise error but the highest HR reconstruction error ($3.5$ bpm), since its decoder averages over the conditional distribution and blurs peak timing. HR reconstruction error decreases sharply when the decoder is swapped for the diffusion model (a 77\% reduction, from $3.5\rightarrow 0.8$ bpm), meaning that the diffusion decoder preserves temporal periodicity much better because it learns the full conditional distribution, not just the mean. Adding the VampPrior further reduces HR error ($0.8 \rightarrow 0.556$ bpm). This supports the claim that the VampPrior regularizes the posterior toward a more structured prior, encouraging the encoder to represent temporal periodicity more accurately. Figure~\ref{fig:reconstruct_samples} shows representative reconstructions from all three models on the same test window. 

Unconditional generation reveals a more fundamental difference between models. When sampling from $\mathcal{N}(0, I)$, the Gaussian prior does not match the aggregate posterior learned by the encoder, so generation-time samples fall outside the latent regions seen during training. The VampPrior addresses this by adapting the prior to the encoder's aggregate posterior, enabling VAMP-Diff to produce realistic PPG unconditionally shown in Figure~\ref{fig:gen_comparison}.

\begin{table}[t]
\footnotesize
\centering
\caption{Reconstruction metrics on the test set (mean $\pm$ std)}
\resizebox{\columnwidth}{!}{%
\begin{tabular}{lcccc}
\toprule
Model & MAE & RMSE & Corr & HR Abs Err (bpm) \\
\midrule
Vanilla VAE &
    0.045$\pm$0.018 &
    0.055$\pm$0.021 &
    0.9997$\pm$0.0004 &
    3.5$\pm$2.0  \\
VAE-Diff (no VampPrior) &
    0.140$\pm$0.030 & 
    0.160$\pm$0.032 & 
    0.9992$\pm$0.0012 & 
    0.8$\pm$2.0 \\
VAMP-Diff  &
    0.151$\pm$0.031 & 
    0.170$\pm$0.033 & 
    0.9991$\pm$0.0013 & 
    0.556$\pm$1.904 \\
\bottomrule
\end{tabular}}
\label{tab:recon_metrics}
\end{table}

\begin{figure*}[t]
    \centering

    \subfloat[Vanilla VAE]{
        \includegraphics[width=0.30\textwidth]{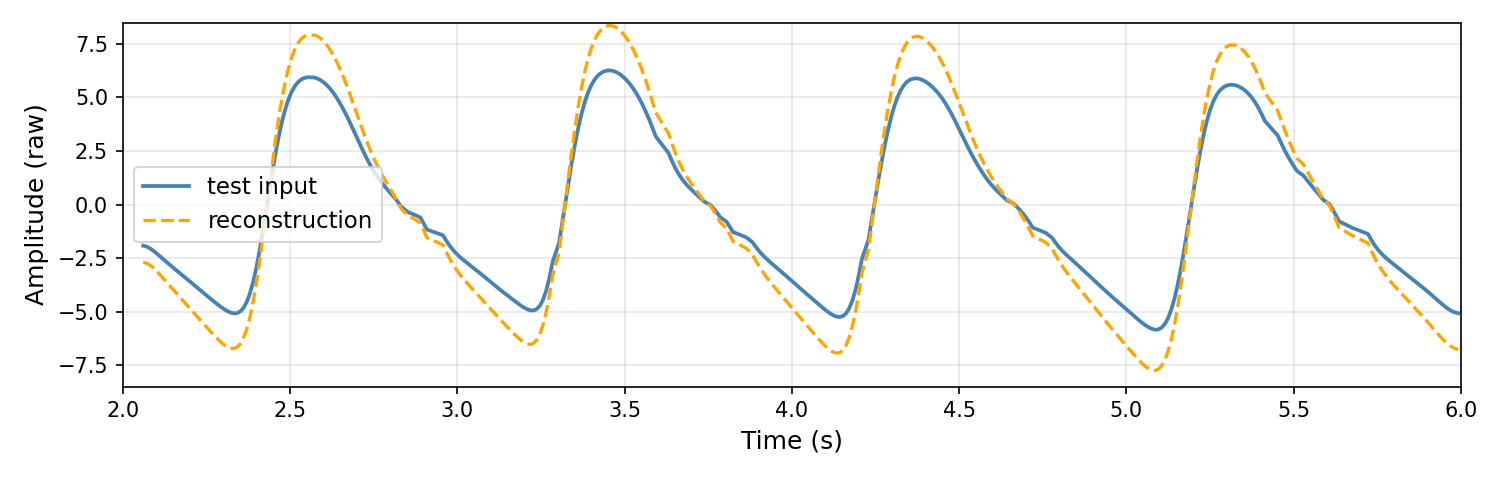}
        \label{subfig:vanilla_vae}
    }
    \hfil
    \subfloat[VAE-Diff]{
        \includegraphics[width=0.30\textwidth]{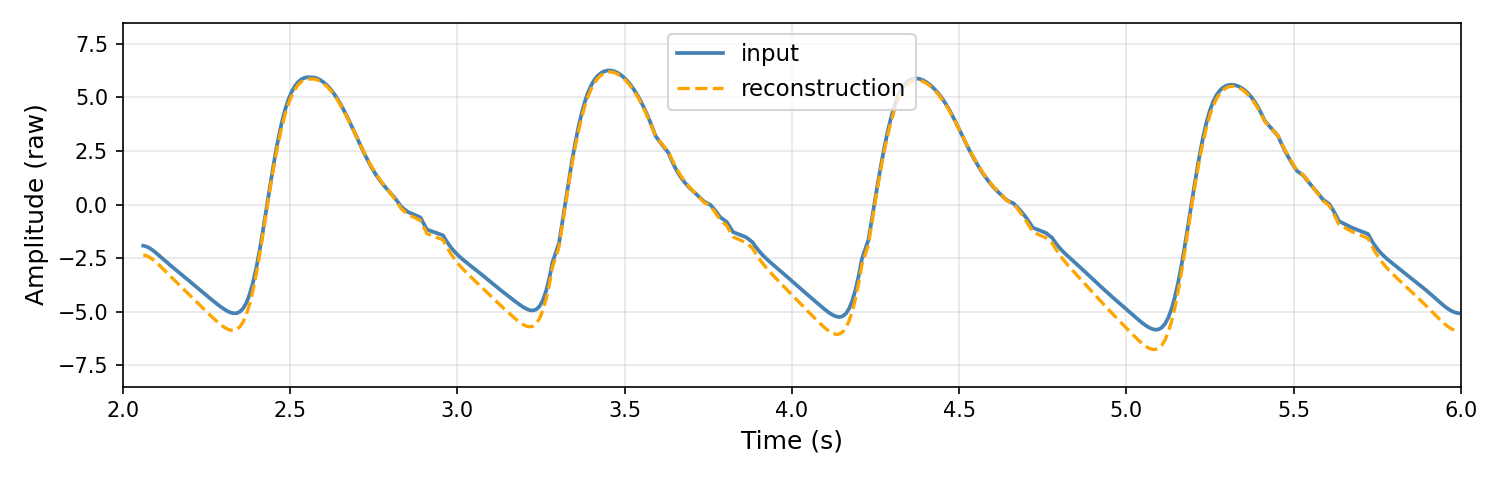}
        \label{subfig:vae_diff}
    }
    \hfil
    \subfloat[VAMP-Diff]{
        \includegraphics[width=0.30\textwidth]{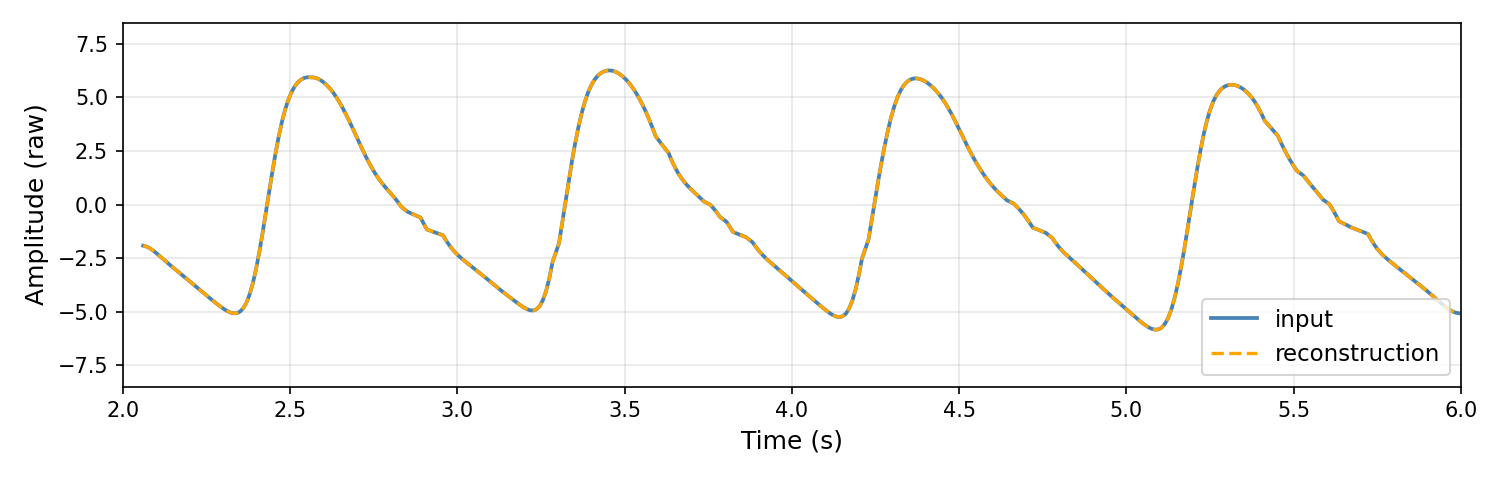}
        \label{subfig:vamp}
    }

    \caption{Reconstruction of the same test PPG window by three models. 
    (a) Vanilla VAE: good pointwise reconstruction but morphologically smoothed. 
    (b) VAE-Diff: sharper morphology due to the diffusion decoder, but still limited by the Gaussian prior. 
    (c) VAMP-Diff: best morphological fidelity with the lowest HR error, as the VampPrior enables a more structured latent space.}
    \label{fig:reconstruct_samples}
\end{figure*}

\begin{figure}[t]
    \centering

    \subfloat[Real PPG\label{subfig:gen_real}]{
        \includegraphics[width=0.95\linewidth]{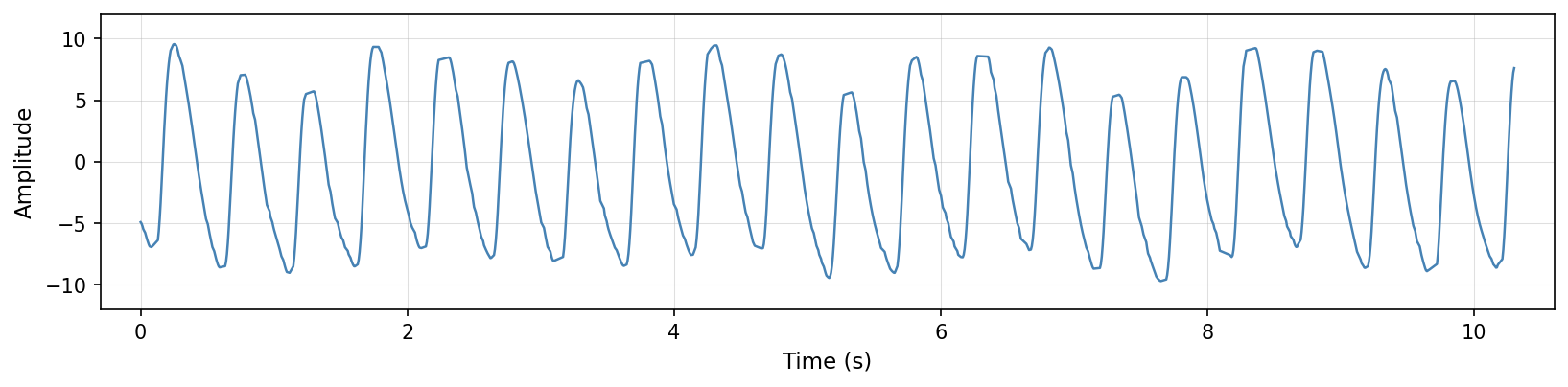}
    }\\[0.1em]

    \subfloat[Vanilla VAE\label{subfig:gen_vae}]{
        \includegraphics[width=0.95\linewidth]{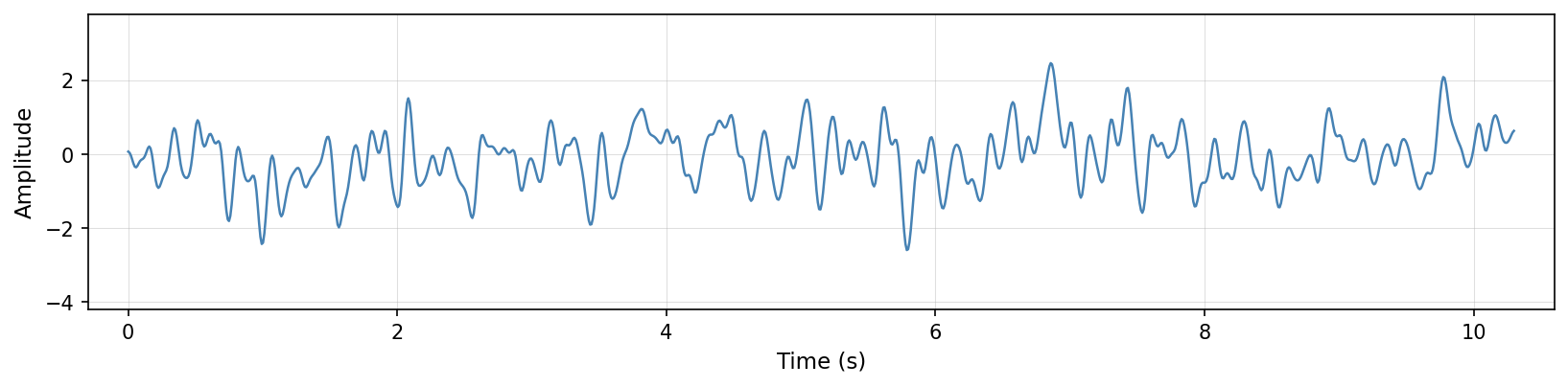}
    }\\[0.1em]

    \subfloat[VAE-Diff\label{subfig:gen_vaediff}]{
        \includegraphics[width=0.95\linewidth]{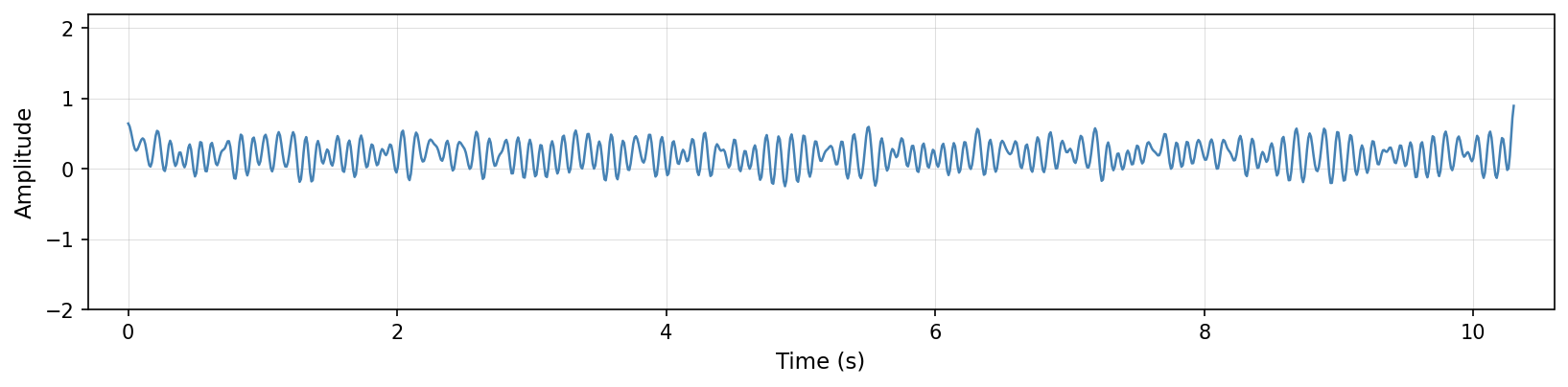}
    }\\[0.1em]

    \subfloat[VAMP-Diff\label{subfig:gen_vampdiff}]{
        \includegraphics[width=0.95\linewidth]{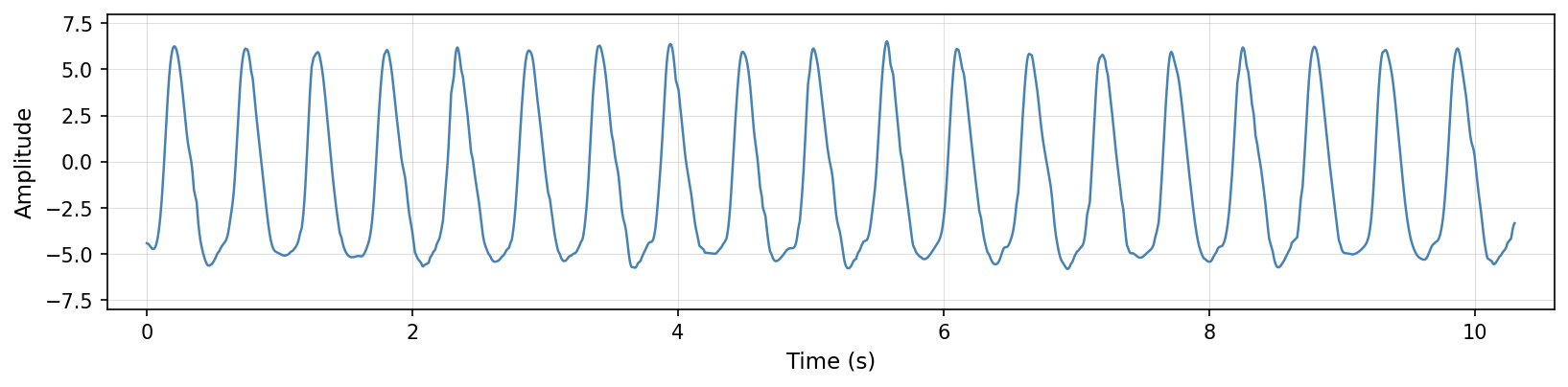}
    }

    \caption{Unconditional generation samples compared to a real PPG window. 
    (a) Real PPG: clear systolic peaks with realistic morphology and amplitude modulation. 
    (b) Vanilla VAE: no periodic structure, indicating generation failure without a structured latent prior. 
    (c) VAE-Diff (Gaussian prior): periodic structure present but amplitude collapses near zero, consistent with prior-posterior mismatch at generation time. 
    (d) VAMP-Diff: clean periodic waveform with morphology closest to the real signal.}
    \label{fig:gen_comparison}
\end{figure}

\subsubsection{Ablation Study} 
Table~\ref{tab:ablation} isolates the contribution of each design choice. Removing KL regularization yields the best pointwise reconstruction but collapses the latent space, making generation impossible. A Gaussian prior causes generation failure due to prior-posterior mismatch. Replacing the temporal sequence latent with a global pooled vector preserves pointwise reconstruction quality but fails at generation, confirming that temporal structure in $z$ is necessary for HR-faithful synthesis. Removing auxiliary losses causes generation failure, indicating that morphological supervision is necessary for the decoder to produce physiologically valid signals. Random pseudo-input initialization degrades both reconstruction and generation.  Stratified initialization restores reconstruction quality but leaves a $13.2$ bpm HR gap because the compressed prior loses temporal periodicity. The full-resolution prior reduces the HR gap to $1.2$ bpm, confirming that the full-resolution prior is the critical component for generation fidelity.

\begin{table}[t]
\footnotesize
\centering
\caption{Ablation study results. Corr is measured on the test set reconstruction. HR Gap is $|\mu_{\text{HR,gen}} - \mu_{\text{HR,real}}|$ and Peak Det. is the fraction of generated signals with at least two detected peaks.}
\begin{tabular}{lccc}
\toprule
Configuration & Corr & HR Gap (bpm) & Peak Det. \\
\midrule
No KL regularization & 
    0.9998 & N/A & $<0.05$ \\
Gaussian prior & 
    0.9988 & N/A & $<0.05$ \\
Pooled latent (global $z$) & 
    0.9988 &  N/A & $<0.05$ \\
No auxiliary losses & 
    0.9998 &  N/A & $<0.05$ \\
VampPrior, random init & 
    0.9820 & 12.0 & $>0.90$ \\
VampPrior, stratified & 
    0.9991 & 13.2 & $>0.90$ \\
\textbf{VampPrior, full-res (proposed)} & 
    \textbf{0.9991} & \textbf{1.2} & 
    \textbf{0.97} \\
\bottomrule
\end{tabular}\\[4pt]
\footnotesize
\label{tab:ablation}
\end{table}

\subsection{Empirical Result 2: Downstream Tasks}
\label{subsec:downstream_tasks}
This section evaluates whether the generated signals maintain the physiological properties needed for the estimation of heart rate and respiratory rate.
Before evaluating downstream tasks, we first verify that the decoder actively uses the latent $z$. Replacing $\mu_\phi(x_0)$ with $z_{\text{rand}} \sim \mathcal{N}(0, I)$ while fixing $x_T$ shifts the decoder output by approximately 21\% of the signal range (Table~\ref{tab:z_sensitivity}), confirming that $z$ actively guides the denoising process rather than being bypassed. However, sensitivity to $z$ alone does not guarantee that $z$ encodes clinically meaningful content, only that the decoder uses the information that it carries. The following evaluations test whether that information corresponds to heart rate, respiratory rate, and waveform morphology---quantities required for clinical use. The sensitivity ratio is defined as
{\setlength{\abovedisplayskip}{4pt}
\setlength{\belowdisplayskip}{4pt}
\begin{equation}
\label{eq:z_sensitivity}
\rho =
\frac{
\mathrm{MAD}\!\left(
\widehat{x}_0(\mu_\phi(x_0)),
\widehat{x}_0(z_{\mathrm{rand}})
\right)
}{
\mathrm{range}\!\left(
\widehat{x}_0(\mu_\phi(x_0))
\right)
}.
\end{equation}
}

\begin{table}[t]
\footnotesize
\centering
\caption{Latent sensitivity. Non-zero ratio 
confirms the decoder output changes when $z$ changes.}
\begin{tabular}{lccc}
\toprule
Sample & MAD & Output range & Ratio $\rho$ \\
\midrule
0 & 1.177 & 5.419 & 0.217 \\
1 & 1.159 & 5.419 & 0.214 \\
2 & 1.157 & 5.406 & 0.214 \\
\bottomrule
\end{tabular}
\label{tab:z_sensitivity}
\end{table}

\subsubsection{Heart Rate Estimation}
Heart rate is estimated from inter-beat intervals detected in the bandpass-filtered signal ($0.7$-$3.0$ Hz) using a peak detector with minimum inter-peak distance $0.35$~s, prominence threshold $0.1\sigma$, and height at the 60th percentile. Per-window HR is computed as $60 / \bar{\mathrm{IBI}}$, where $\bar{\mathrm{IBI}}$ is the mean inter-beat interval.

Table~\ref{tab:hr_results} reports the comparison across 5{,}000 generated and 1{,}645 real test signals. Mean HR ($90.0$ vs. $90.7$ bpm) and peak count ($15.2$ vs $15.5$) match real data closely. However, the generated distribution shows a narrower spread in HR variance ($18.3$ vs. $21.4$ bpm), PTP amplitude ($1.88$ vs $5.14$), and signal standard deviation ($0.28$ vs $1.63$), indicating that while first-order statistics are well preserved, amplitude and rate diversity are more constrained than in real data. The mean pairwise distance $3.99 \pm 0.65$ confirms that generation does not collapse to a single mode. Distributional differences are confirmed by KS tests (Table~\ref{tab:ks_tests}), with all three metrics, with the largest discrepancy is in signal standard deviation (KS$=0.486$), consistent with the amplitude under-dispersion observed in Table~\ref{tab:hr_results}. The generated distribution underrepresents low heart rates (below $75$ bpm) and slightly overrepresents high heart rates (above $115$ bpm), reflecting residual distributional differences despite the VampPrior spanning the training HR range. 

\begin{table}[t]
\footnotesize
\centering
\caption{ Physiological statistics of 5{,}000 generated vs 1{,}645 real PPG signals.}
\begin{tabular}{lcc}
\toprule
Metric & Generated & Real \\
\midrule
HR (bpm) & $90.0 \pm 18.0$ & $90.7 \pm 21.4$ \\
IBI (ms) & $647.2 \pm 158.2$ & $656.3 \pm 183.2$ \\
Peak Count & $15.2 \pm 2.9$ & $15.5 \pm 3.7$ \\
PTP Amplitude & $14.13 \pm 1.88$ & $13.65 \pm 5.14$ \\
Signal Std & $3.73 \pm 0.28$ & $3.79 \pm 1.63$ \\
Mean pairwise dist. & $3.99 \pm 0.65$ & — \\
\bottomrule
\end{tabular}
\label{tab:hr_results}
\end{table}

\begin{table}[t]
\footnotesize
\centering
\caption{Kolmogorov--Smirnov tests between generated ($N=5{,}000$) and real ($N=1{,}645$) signal distributions. The KS statistic is the maximum absolute difference between empirical cumulative distribution functions, ranging from $0$ (identical) to $1$ (no overlap).}
\begin{tabular}{lcc}
\toprule
Metric & KS statistic \\
\midrule
Heart rate (bpm) & 0.181 \\
PTP amplitude    & 0.232 \\
Signal std       & 0.486 \\
\bottomrule
\end{tabular}
\label{tab:ks_tests}
\end{table}

\subsubsection{Respiratory Rate Estimation}
A pretrained CNN-based RR estimator is evaluated in two scenarios to determine if the proposed model preserves respiratory rate information.

\paragraph{Reconstruction}
Table~\ref{tab:rr_estimator} reports the estimator's performance on real held-out PPG signals against EtCO$_2$ ground truth, establishing a baseline before applying it to reconstructed signals. The estimator achieves moderate accuracy on real signals (MAE$=6.63$ bpm, $r=0.669$), which limits the strength of conclusions drawn from its predictions.

\begin{table}[t]
\footnotesize
\centering
\caption{RR estimator performance on real PPG signals against EtCO$_2$ ground truth.}
\begin{tabular}{lc}
\toprule
Metric & Value \\
\midrule
MAE (bpm) & 6.63 \\
RMSE (bpm) & 7.18 \\
Pearson $r$ & 0.669 \\
\bottomrule
\end{tabular}
\label{tab:rr_estimator}
\end{table}

\paragraph{Generation}
Table~\ref{tab:rr_result} compares predictions over 5{,}000 generated and 1{,}645 real signals to evaluate distributional agreement at scale. The mean predicted RR differs by only $0.3$ bpm ($15.5$ vs $15.8$ bpm), with similar distributional spread ($2.3$vs. $3.3$ bpm standard deviation).

\begin{table}[t]
\footnotesize
\centering
\caption{Predicted RR distribution across 5{,}000 
        generated and 1{,}645 real signals (bpm).}
\begin{tabular}{lcccc}
\toprule
        Metric & Generated & Real\\
        \midrule
        Pred RR mean$\pm$std (bpm) & 
            $15.5 \pm 2.3$ & $15.8 \pm 3.3$  \\
        Pred RR range (bpm)  & $11.4 - 23.1$ & $10.2 - 21.3$\\
\bottomrule
\end{tabular}
\label{tab:rr_result}
\end{table}

The predicted RR distribution has a similar mean to the real signals (difference in $0.3$ bpm), but generated signals produce a smoother unimodal distribution compared to the bimodal real distribution, consistent with the model not explicitly encoding respiratory rate. Since ground-truth RR is not available for generated signals, these predictions reflect the estimator's behavior on generated PPG rather than the true respiratory rate.

\section{Conclusions}
\label{sec:conclusions}

This paper presented VAMP-Diff, a jointly trained variational diffusion model for PPG signal modeling that combines a temporal encoder, a conditional one-dimensional diffusion decoder, and a VampPrior regularizer on a compact pooled latent. The model passes the full temporal latent to the decoder during reconstruction and generation, and hence beat timing, pulse morphology, amplitude structure, and respiratory modulation stay available to the denoising process, and the compact VampPrior KL makes prior matching tractable without forcing generation through a fixed Gaussian prior. Experiments on CapnoBase show that VAMP-Diff generates physiologically realistic PPG waveforms, improves heart-rate preservation over Gaussian-prior baselines, maintains respiratory-rate estimator consistency, and shows reconstruction-error sensitivity to waveform corruptions.

\bibliographystyle{ieeetr}
\bibliography{references}

\clearpage
\appendices
\section{Proof of Theorem~\ref{thm:vampdiff_objective}}
\label{pr:proof_thm}
\begin{proof}
Let \(q_\phi^A(\widetilde z\mid x_0)=A_{\#}q_\phi(z\mid x_0)\) and
\(p_\psi^A(\widetilde z)=\frac{1}{K}\sum_{k=1}^K q_\phi^A(\widetilde z\mid u_k)\). Given the variational identity on the compact latent space, we have
\begin{equation}
\begin{aligned}
\log p_{\theta,\psi}(x_0)
&=
\log\int
p_\theta(x_0\mid z)\,
p_\psi^A(\widetilde z)\,
d\widetilde z
\\
&\ge
\mathbb E_{q_\phi(z\mid x_0)}
\big[
\log p_\theta(x_0\mid z)
\big]
\\
&\quad
-
D_{\mathrm{KL}}\!\left(
q_\phi^A(\widetilde z\mid x_0)
\,\|\,
p_\psi^A(\widetilde z)
\right).
\end{aligned}
\end{equation}
where \(\widetilde z=Az\). The KL term is well-defined since both distributions live on the same compact latent space \(\mathbb R^{C_z\times T_c}\) and the conditional likelihood keeps the full latent \(z\in\mathbb R^{C_z\times T_z}\). Given the diffusion forward process~\ref{eq:diff_forward} and the reverse model 

\begin{equation}
\label{eq:LB}
\begin{aligned}
\log p_\theta(x_0\mid z)
&=
\log\int p_\theta(x_{0:T}\mid z)\,dx_{1:T}
\\
&=
\log\int q(x_{1:T}\mid x_0)
\frac{p_\theta(x_{0:T}\mid z)}
{q(x_{1:T}\mid x_0)}
\,dx_{1:T}
\\
&\geq
\int q(x_{1:T}\mid x_0)
\log
\frac{p_\theta(x_{0:T}\mid z)}
{q(x_{1:T}\mid x_0)}
\,dx_{1:T}.
\end{aligned}
\end{equation}
Using the Markov factorizations and the exact posterior
\(q(x_{t-1}\mid x_t,x_0)\), the lower bound~\ref{eq:LB} decomposes into

\begin{equation}
\begin{aligned}
\mathcal L_{\mathrm{diff}}(x_0,z)
&=
- \underbrace{
D_{\mathrm{KL}}\!\left(
q(x_T\mid x_0)\,\|\,p(x_T)
\right)
}_{\text{independent of } \theta}
\\
&\quad
+
\mathbb E_q\!\left[
\log p_\theta(x_0\mid x_1,z)
\right]
\\
&\quad
-
\sum_{t=2}^{T}
\mathbb E_q
\left[
\begin{aligned}
&D_{\mathrm{KL}}\!\left(
q(x_{t-1}\mid x_t,x_0)
\,\|\, \right.
\\
&\left.
p_\theta(x_{t-1}\mid x_t,z)
\right)
\end{aligned}
\right].
\end{aligned}
\end{equation}
Moreover, under the standard diffusion noise schedule, we have \(\bar\alpha_T\approx0\) and \(q(x_T\mid x_0)\approx\mathcal N(0,I)=p(x_T),\)
which makes the terminal KL term negligible. for \(t\geq 2\), we assume

\begin{equation}
\begin{aligned}
q(x_{t-1}\mid x_t,x_0)
&=
\mathcal N\!\left(
\widetilde\mu_t(x_t,x_0),
\widetilde\beta_t I
\right),
\\
p_\theta(x_{t-1}\mid x_t,z)
&=
\mathcal N\!\left(
\mu_\theta(x_t,t,z),
\sigma_t^2 I
\right).
\end{aligned}
\end{equation}
We can compute the KL term as follows
\begin{equation}
\begin{aligned}
&D_{\mathrm{KL}}\!\left(
q(x_{t-1}\mid x_t,x_0)
\,\|\,
p_\theta(x_{t-1}\mid x_t,z)
\right)
\\
&\quad =
\frac{1}{2\sigma_t^2}
\left\|
\widetilde\mu_t(x_t,x_0)
-
\mu_\theta(x_t,t,z)
\right\|_2^2
+
C_t .
\end{aligned}
\end{equation}
where \(C_t\) is independent of \(\theta\). Following the assumptions made in \cite{ho2020denoising}, the exact DDPM posterior mean under \(x_0\)-prediction parameterization and the reverse posterior mean with the network prediction
\(\widehat x_0=f_\theta(x_t,t,z)\) are given by 
\begin{equation*}
\begin{aligned}
\widetilde\mu_t(x_t,x_0)
&=
\frac{\sqrt{\bar\alpha_{t-1}}\beta_t}
{1-\bar\alpha_t}x_0
+
\frac{\sqrt{\alpha_t}(1-\bar\alpha_{t-1})}
{1-\bar\alpha_t}x_t,
\\
\mu_\theta(x_t,t,z)
&=
\frac{\sqrt{\bar\alpha_{t-1}}\beta_t}
{1-\bar\alpha_t}
f_\theta(x_t,t,z)
+
\frac{\sqrt{\alpha_t}(1-\bar\alpha_{t-1})}
{1-\bar\alpha_t}x_t .
\end{aligned}
\end{equation*}
and therefore
\begin{equation}
\begin{aligned}
&D_{\mathrm{KL}}\!\left(
q(x_{t-1}\mid x_t,x_0)
\,\|\,
p_\theta(x_{t-1}\mid x_t,z)
\right)
\\
&\quad =
w_t
\left\|
x_0-f_\theta(x_t,t,z)
\right\|_2^2
+
C_t,
\\
&w_t
=
\frac{\bar\alpha_{t-1}\beta_t^2}
{2\sigma_t^2(1-\bar\alpha_t)^2}.
\end{aligned}
\end{equation}
Similar to the simplified DDPM objective \cite{ho2020denoising}, the positive scalar \(w_t\) is dropped, which preserves the same minimizer up to timestep reweighting. Given forward process~\ref{eq:diff_forward}, the diffusion negative lower-bound becomes
\begin{equation}
  \mathbb E_{t,\epsilon,z}
\left[
\left\|
x_0-
f_\theta
\!\left(
\sqrt{\bar\alpha_t}x_0+\sqrt{1-\bar\alpha_t}\epsilon,
t,
z
\right)
\right\|_2^2
\right].  
\end{equation}
Adding the compact VampPrior KL with annealing weight \(\beta\) gives the loss function~\ref{eq:vamp_diff} up to constants independent of \((\theta,\phi,\psi)\) and the standard positive diffusion timestep weights. The decoder therefore uses the full sequence latent \(z\), whereas prior regularization is applied only to the pooled latent \(\widetilde z\).
\end{proof}

\section{Waveform Corruption Detection}
\label{subsec_rr_anomaly}

CapnoBase does not provide annotated abnormal PPG events. To check whether reconstruction error is sensitive to abnormal waveform structure, we use synthetic waveform corruptions as a sanity check with reconstruction error, and not as a standalone anomaly benchmark. Clean test windows are corrupted with additive Gaussian noise, low-frequency baseline wander, amplitude clipping, and partial flatline. These corruptions represent broadband noise, slow drift, dynamic-range saturation, and signal dropout. The trained model is applied to clean and corrupted windows without anomaly-specific training, and per-window reconstruction error is used as the anomaly score. This experiment checks whether deviations from the learned clean PPG distribution receive larger reconstruction errors. We report AUROC, AUPRC, TPR@\(5\%\)FPR, and the Spearman correlation between scores computed on input signals and scores computed on reconstructions. 

\subsection{Empirical Results: Anomaly Detection Sanity Check}
\label{subsec:anomaly_results}
As a sanity check on whether reconstruction error reflects signal quality, anomalous windows are constructed by injecting four types of synthetic corruption into random subset of the $1{,}645$ clean test windows. Each corruption type is applied to an equal share of the subset: $136$ windows receive additive Gaussian noise (motion artifact), $136$ receive slow baseline wander (breathing artifact), $135$ receive amplitude clipping (sensor saturation), and $135$ receive partial flatline (signal loss), yielding $542$ anomalous windows in total. The remaining $1{,}645$ windows are kept clean and treated as normal, giving a final evaluation set of $2{,}187$ windows with a $75.2\%$ / $24.8\%$ normal/anomalous split. 

Two anomaly scores are computed from each encode-decode pass: the mean absolute error $\mathrm{MAE}(x_0, \hat{x}_0)$ and the correlation-based score $1 - \rho$, where $\rho$ is the Pearson correlation between input and reconstruction. No anomaly-specific training is performed; the model was trained exclusively on clean signals, so corrupted inputs that fall outside the learned latent distribution are expected to produce higher reconstruction error.

Results are mixed and corruption-dependent, as shown in Table~\ref{tab:anomaly}. Gaussian noise is detected perfectly (AUROC$=1.0$, MAE$_{\text{anomalous}}=0.593$ vs 
MAE$_{\text{normal}}=0.039$), because high-frequency random corruption falls entirely outside the smooth periodic structure of the learned latent space. Baseline wander is partially separated (AUROC$=0.847$). Amplitude clipping performs below chance (AUROC$=0.400$) because clipped signals retain regular periodicity and are reconstructed faithfully as low-amplitude PPG, so the model cannot distinguish them from genuinely low-amplitude normal signals. The overall AUROC of $0.739$ should be interpreted cautiously given the poor clipping result; the Spearman correlation $r=0.98$ between input and reconstruction anomaly scores confirms the pipeline preserves signal quality ordering, but this experiment is not intended as a standalone anomaly detection benchmark.

\begin{table}[t]
\footnotesize
\centering
\caption{Anomaly detection sanity check results using reconstruction error as score.}
\begin{tabular}{lccc}
\toprule
Corruption & AUROC & AUPRC & TPR@5\%FPR \\
\midrule
Noise    & 1.000 & 1.000 & 1.000 \\
Baseline & 0.847 & 0.700 & 0.632 \\
Flatline & 0.708 & 0.149 & 0.111 \\
Clip     & 0.400 & 0.081 & 0.059 \\
\midrule
Overall  & 0.739 & 0.672 & 0.452 \\
\bottomrule
\end{tabular}
\label{tab:anomaly}
\end{table}

\begin{figure}[t]
    \centering
    \includegraphics[width=\linewidth]
    {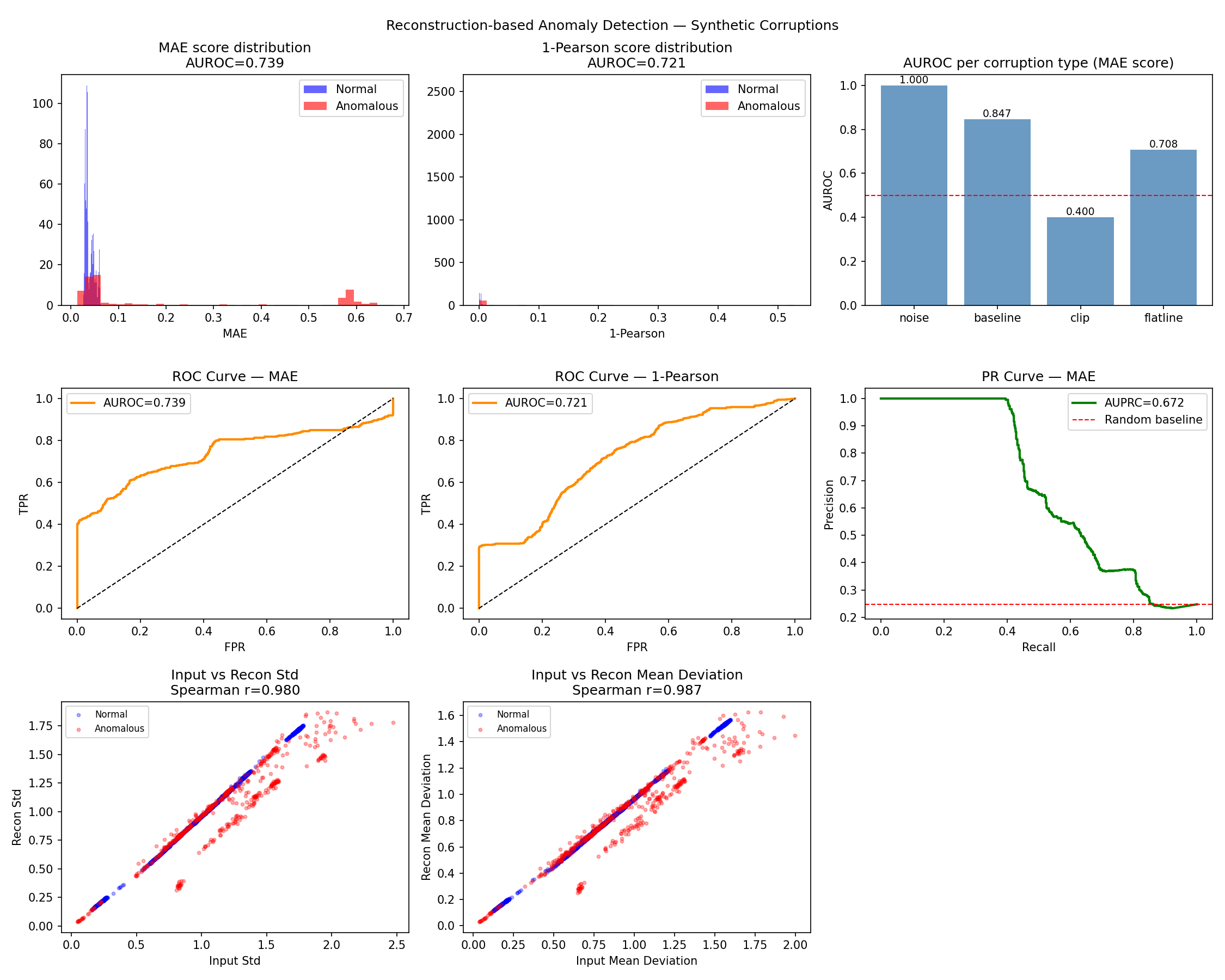}
    \caption{Score distributions, ROC curves, AUROC per corruption type, and input vs reconstruction scatter plots (Spearman $r=0.98$).}
    \label{fig:anomaly}
\end{figure}

\begin{figure}[H]
    \centering
    \includegraphics[width=\linewidth]
    {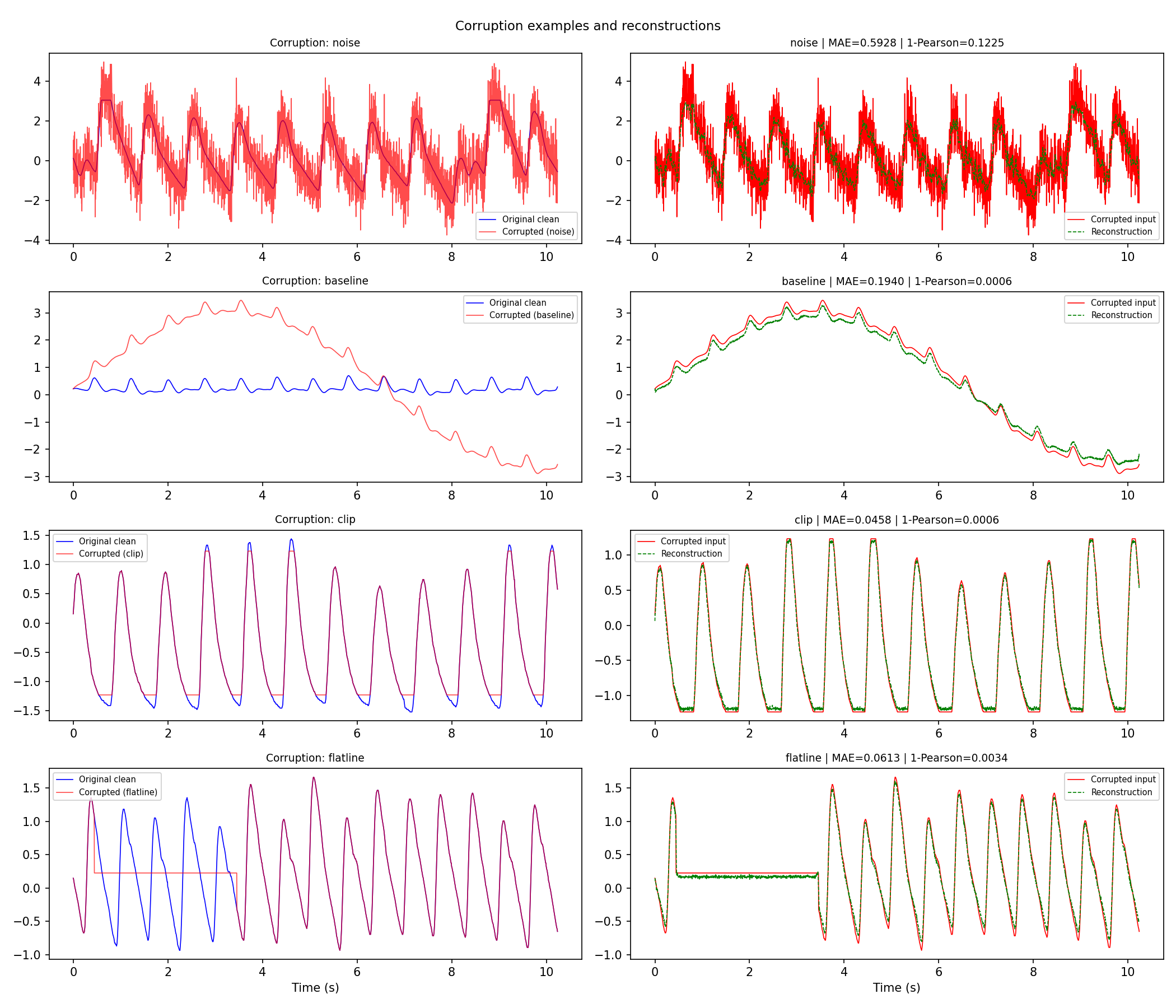}
    \caption{Example corruptions (left) and reconstructions (right). Noise produces high reconstruction error; clipping is reconstructed faithfully as low-amplitude PPG, explaining the poor detection rate.}
    \label{fig:corruption_examples}
\end{figure}

\section{Respiratory Rate Comparison}

Table~\ref{tab:rr_recon} compares CNN predictions on real and reconstructed PPG signals. The estimator produces similar predictions on real and reconstructed signals (mean $|\Delta| = 0.74$ bpm across recordings), indicating that the reconstruction pipeline does not substantially degrade the frozen estimator's predictions. Given the estimator's moderate accuracy on real signals, this should be interpreted as evidence that reconstruction preserves the signal features the estimator relies on, rather than as a claim of strong RR accuracy.

\begin{table}[t]
\footnotesize
\centering
\caption{RR estimation on real vs reconstructed PPG signals. Ground truth is derived from EtCO$_2$ capnography. ($|\Delta| = |\text{pred(real)} - \text{pred(recon)}|$)}
\begin{tabular}{lcccc}
\toprule
Recording & RR$_{\mathrm{EtCO}_2}$ & pred(real PPG) & pred(recon PPG) & $|\Delta|$ \\
\midrule
0312 & 11.84 & 12.37 & 12.98 & 0.61 \\
0313 & 14.53 & 16.23 & 15.82 & 0.41 \\
0322 & 18.00 & 17.56 & 15.58 & 1.98 \\
0328 & 12.45 & 12.82 & 12.99 & 0.17 \\
0331 & 14.71 & 16.76 & 16.83 & 0.07 \\
0333 & 12.14 & 12.62 & 13.97 & 1.35 \\
\bottomrule
\end{tabular}
\label{tab:rr_recon}
\end{table}

\section{Heart Rate Interpolation}

To assess latent space structure, two windows with heart rates of $65.9$ and $119.2$ bpm are encoded to $z_1$ and $z_2$, and signals are decoded from $z_\alpha = (1-\alpha)z_1 + \alpha z_2$ for $\alpha \in \{0, 0.25, 0.5, 0.75, 1.0\}$. The decoded heart rates transition as $65.9$, $65.9$, $100.1$, $119.1$, $119.2$ bpm, shown in Figure~\ref{fig:interpolation}, confirming that the latent space supports smooth generation across different heart rate regimes.

\begin{figure}[H]
    \centering
    \includegraphics[width=0.9\linewidth]
    {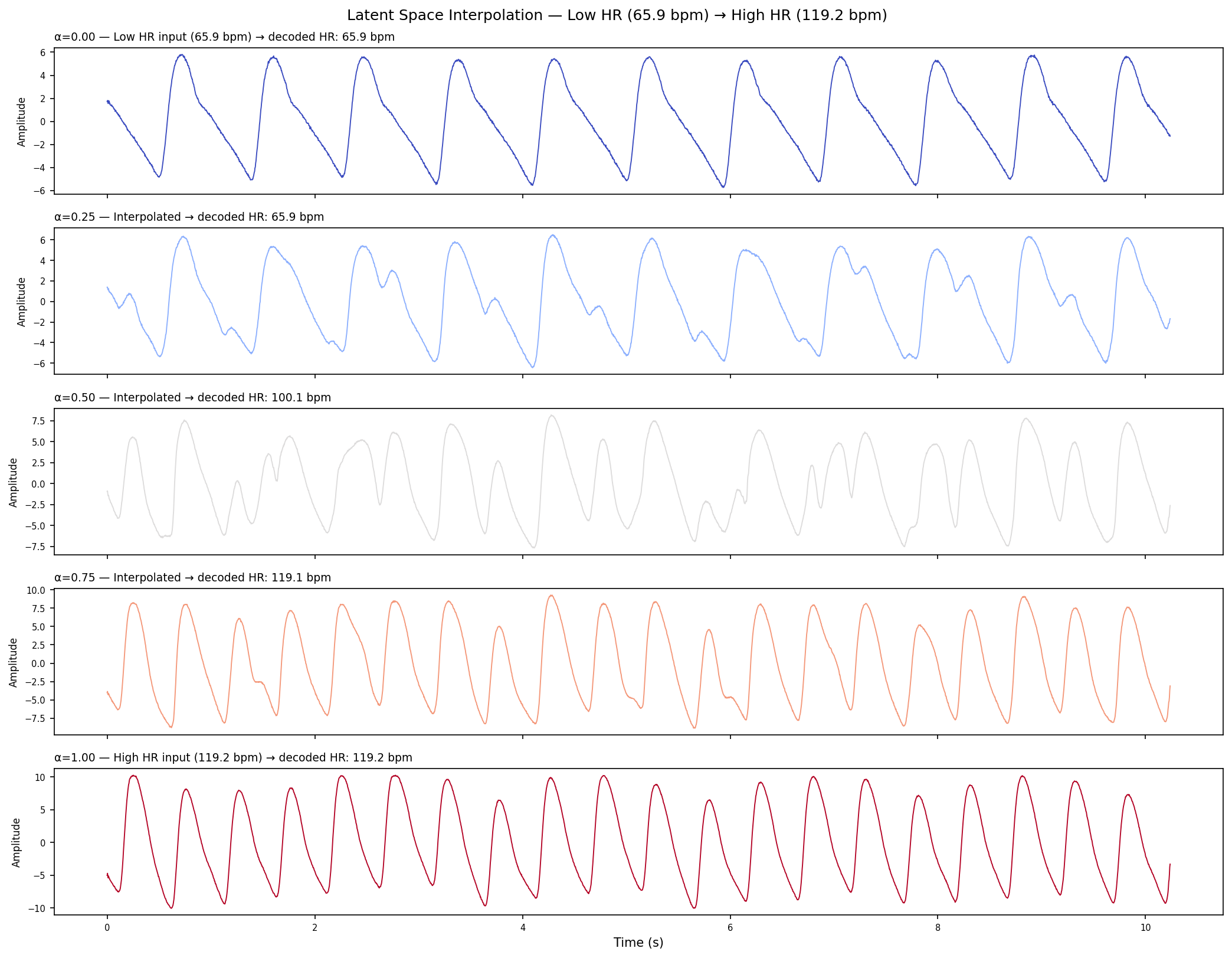}
    \caption{Decoded signals from latents interpolated between a low-HR ($65.9$ bpm) and high-HR ($119.2$ bpm) test window. Heart rate transitions smoothly from $65.9$ to $119.2$ bpm as $\alpha$ increases from $0$ to $1$.}
    \label{fig:interpolation}
\end{figure}
\end{document}